\documentclass[reqno,12pt]{amsart}
\numberwithin{equation}{section}
\usepackage{amsfonts, amsthm}
\newtheorem{lemma}{Lemma}
\newtheorem{theorem}{Theorem}

\theoremstyle{definition}

\newcommand{\eref}[1]{\eqref{#1}}
\newcommand{\rme}{\mathrm{e}}
\newcommand{\fl}{}

\selectfont 

\usepackage{hyperref}
\usepackage{orcidlink}

\begin{document}
\title[New family of Darboux integrable equations]{A new family of Darboux integrable partial differential equations}

\author{S. Ya. Startsev\,\,\orcidlink{0000-0001-5891-6191}}\thanks{\href{http://www.researcherid.com/rid/D-1158-2009}{Web of Science ResearcherID: D-1158-2009}}
\address{Institute of Mathematics, Ufa Federal Research Centre, Russian Academy of Sciences}  

\begin{abstract}
A new family of Darboux integrable partial differential equations is constructed. The minimal orders of integrals in both characteristics can simultaneously be arbitrarily high for equations within this family. This suggests that the complete list of Darboux integrable equations may contain substantially more equations than are currently known.
\end{abstract}

\vspace{2pc}
\keywords{nonlinear hyperbolic partial differential equations, Darboux integrability, conservation laws, integrals, differential substitutions} 

\maketitle

\section{Introduction: the main result with an outline of the proof}\label{intro}
One of the classical open problems (see, for example, \cite{Gurs}) is to find all partial differential equations
\begin{equation}\label{hyp}
u_{xy}=F(x,y,u,u_x,u_y)
\end{equation}
for which there exist functions  
\begin{eqnarray}
w(x,y,u,u_1, \dots, u_k), \qquad&w_{u_k} \ne 0,\quad k>0, \qquad&u_i:=\partial^i u /\partial x^i, \label{xi}\\
\bar{w}(x,y,u,\bar{u}_1, \dots, \bar{u}_m), \qquad&\bar{w}_{\bar{u}_m}\ne 0,\quad m>0, \qquad&\bar{u}_i:=\partial^i u /\partial y^i,
\end{eqnarray} 
such that for any solution of~\eref{hyp} the function $w$  depends only on $x$, and the function $\bar{w}$ depends only on $y$. Such equations are said to be \emph{Darboux integrable}, and the functions $w$ and $\bar{w}$ are respectively called an \emph{$x$-integral of order $k$} and a \emph{$y$-integral of order $m$} for the equation~\eref{hyp}. An integral of the lowest  order for an equation is called a \emph{minimal integral} of this equation. 

The best-known examples of Darboux integrable equations are the wave equation $u_{xy}=0$ with $w=u_x$ and $\bar{w}=u_y$ and the Liouville equation $u_{xy}=\rme^u$ admitting the integrals 
\[ w= u_{xx} - u_x^2/2,  \qquad \bar{w}= u_{yy} - u_y^2/2. \] 
It is easy to check these (and other) examples by taking into account that $w$ and $\bar{w}$ are $x$- and $y$-integrals iff $D_y(w)=0$ and $D_x(\bar{w})=0$, where 
\begin{eqnarray}
&D_{x} &= \frac{\partial}{\partial x}+ u_1 \frac{\partial}{\partial u}  + \sum^{\infty
}_{i=1} \left( u_{i+1} \frac{\partial} {\partial u_i} 
+ D^{i-1}_{y}(F) \frac{\partial}{\partial
\bar{u}_i} \right), \label{dx} \\
&D_{y} &=
\frac{\partial}{\partial y}+ \bar{u}_1 \frac{\partial}{\partial
u}  +\sum^{\infty }_{i=1}\left( \bar{u}_{i+1} \frac{\partial}{\partial \bar{u}_i}  + D^{i-1}_x (F) \frac{\partial}{\partial
u_i}  \right) \label{dy}
\end{eqnarray}
are the total derivatives with respect to $x$ and $y$ by virtue of the equation~\eref{hyp}. (In the last two formulas and further we use the standard convention that the zeroth power of any operator is the identity map; so $D_x^0(F)=D_y^0(F)=F$ and the like.) 

Despite the long history of the study, interest in the Darboux integrable equations~\eref{hyp} continues to this day -- see, for example, relatively recent works \cite{ZhIzv}--\cite{St25}. In particular, a new family of Darboux integrable equations was found in~\cite{St25}. A part of this family consists of the equations
\begin{equation}\label{seq}
 u_{xy}= D_y \left( \sum_{i=1}^{n} \xi_i(y) \eta_i(x) \rme^{-u} \right), 
\end{equation}
which are Darboux integrable for any functions $\xi_i$, $\eta_i$. The integrals of~\eref{seq} are 
\[w=u_x - \sum_{i=1}^{n} \xi_i(y) \eta_i(x) \rme^{-u}, \quad \bar{w} = D_y (\ln \Lambda (\rme^u)), \]
where $\Lambda$ is an operator $\sum_{j=0}^{n} \lambda_j (y) D_y^j$ annihilating all $\xi_i$. 
The $x$-integral is always minimal due to its first order, and the $y$-integral is minimal if the two Wronskians are non-zero:
\begin{equation}\label{wron}
\left| \begin{array}{cccc}
\xi_1 & \xi_2 & \ldots & \xi_n\\
\xi'_1 & \xi'_2 & \ldots & \xi'_n\\
\vdots& \vdots & \ddots & \vdots\\
\xi_1^{(n-1)} & \xi_2^{(n-1)} & \ldots & \xi_n^{(n-1)}
\end{array} \right| 
\left| \begin{array}{cccc}
\eta_1 &  \eta'_1 & \ldots & \eta_1^{(n-1)}\\
\eta_2 &  \eta'_2 & \ldots & \eta_2^{(n-1)}\\
\vdots & \vdots & \ddots & \vdots\\
\eta_n &  \eta'_n & \ldots & \eta_n^{(n-1)}
\end{array} \right| \ne 0.
\end{equation}

Along with this, in \cite{SvSok} differential substitutions of a special class were considered and it was noted that such a substitution relates the generic equation 
\begin{equation}\label{exe}
u_{xy}=D_x(\alpha(x,y) \rme^u)+D_y(\beta(x,y) \rme^{-u}),  
\qquad |\alpha|+|\beta| \ne 0,
\end{equation}
to an equation of the same form. Thus, we can apply this differential substitution many times (infinitely many times in the generic case and $n$ times if we start from~\eref{seq}). As it was proved in \cite{Sljm}, differential substitutions of this class preserve the Darboux integrability. All this allows us to generate a sequence of Darboux integrable equations by starting from~\eref{seq}.

The main goal of the present work is to explicitly construct this sequence. The corresponding result is formulated in the following statement. 
\begin{theorem}\label{main}
Let us denote $\alpha_0=0$, $\beta_0=\Delta_0=\beta(x,y)$,
\begin{equation}\label{delt}
\Delta_i = 
\left| \begin{array}{cccc}
\beta & \beta_x & \ldots & \frac{\partial^i \beta}{\partial x^i}\\
\beta_y & \beta_{xy} & \ldots & \frac{\partial^{i+1} \beta}{\partial x^i \partial y}\\
\vdots& \vdots & \ddots & \vdots\\
\frac{\partial^i \beta}{\partial y^i} & \frac{\partial^{i+1} \beta}{\partial x \partial y^i} & \ldots & \frac{\partial^{2i} \beta}{\partial x^i \partial y^i}
\end{array} \right| , \quad \beta_i = \frac{\Delta_i}{\Delta_{i-1}}, \quad \alpha_i=\frac{1}{\beta_{i-1}}, \quad i>0.
\end{equation}

If $\beta= \sum_{i=1}^{n} \xi_i(y) \eta_i(x)$ and $\Delta_{n-1} \ne 0$, then $\Delta_i \ne 0$ for all $i < n$ and the equations
\begin{equation}\label{abe}
u_{xy}=D_x(\alpha_i \rme^u)+D_y(\beta_i \rme^{-u}), \qquad i=\overline{0,n}, 
\end{equation}
are Darboux integrable. The minimal $x$- and $y$-integrals of the equation~\eref{abe} are 
\begin{equation}\label{abei}
w=D_x(\ln \Upsilon_i(\rme^{-u})) + \beta_i \rme^{-u}, \qquad \bar{w}=D_y(\ln \Lambda_i(\rme^u)) - \alpha_i \rme^u,
\end{equation}
where $\Upsilon_0$, $\Lambda_n$ are the identity map, and
 \[ \fl \qquad \Upsilon_i = \left( D_x + \beta_i \rme^{-u} - (\ln\alpha_1)_x \right) \circ \dots \circ \left( D_x + \beta_i \rme^{-u} - (\ln\alpha_i)_x \right), \qquad i>0, \qquad \]
\begin{equation}\label{lami} 
\fl \qquad \Lambda_i = \left( D_y - \alpha_i \rme^u - (\ln\beta_{n-1})_y \right) \circ \dots \circ \left( D_y - \alpha_i \rme^u - (\ln\beta_{i})_y \right), \qquad i<n, 
\end{equation}
are the differential operators of order $i$ and $n-i$, respectively. Here the symbol $\circ$ denotes the composition of operators.
\end{theorem}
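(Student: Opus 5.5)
The plan is to build the chain of equations \eref{abe} by repeatedly applying a differential substitution, and to verify the integrals \eref{abei} directly along the way.

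\textbf{Step 1: the Laplace-type recursion for the coefficients.} I would start with the Jacobi/Sylvester identity for the Hankel-type determinants \eref{delt}:
\[\Delta_{i+1}\Delta_{i-1}=\Delta_i\,(\Delta_i)_{xy}-(\Delta_i)_x(\Delta_i)_y .\]
Dividing out gives the recursion
\[\beta_{i+1}=\beta_i\,\bigl(\beta_i(\ln\beta_i)_{xy}+\beta_i/\beta_{i-1}\bigr)/\beta_i ,\]
which is the Laplace-cascade (Toda chain) form $(\ln\beta_i)_{xy}=\beta_{i+1}/\beta_i-\beta_i/\beta_{i-1}$, with $\alpha_{i+1}=1/\beta_i$. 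For $\beta=\sum_{j\le n}\xi_j\eta_j$, every column of the matrix in \eref{delt} lies in the span of the $n$ vectors $(\xi_j,\xi_j',\dots)^T$, so $\Delta_n=0$. The matrix factors as (Wronskian-type matrix in $\xi$)$\times$(Wronskian-type matrix in $\eta$); by Cauchy--Binet, $\Delta_{n-1}$ is the product of the two Wronskians in \eref{wron}. If $\Delta_{n-1}\ne0$, both Wronskians are nonzero, so the $\xi_j$ and the $\eta_j$ are each linearly independent. Then each leading minor $\Delta_i$ for $i<n$ is a sum of products of $i{+}1$-column Wronskians (Cauchy--Binet again), and is nonzero as a function. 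Alternatively, the Jacobi identity shows that $\Delta_i=0$ would force $\Delta_{i+1}\Delta_{i-1}=0$ and, inductively, $\Delta_{n-1}=0$. I would use the Jacobi route.

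\textbf{Step 2: the substitution linking consecutive equations.} Following \cite{SvSok}, I would exhibit an explicit first-order substitution from equation $i$ to equation $i+1$, for instance $v=u+\ln(\text{something involving } u_x,\beta_i)$, and check using the Toda relation of Step 1 that it maps \eref{abe} with index $i$ to index $i+1$. Rather than rely on this, though, I would prove integrability directly. For $i=0$ the equation is \eref{seq}, and \eref{abei} gives $w=\beta\rme^{-u}+u_x$ and $\bar w=D_y\ln\Lambda_0(\rme^u)$. I would check that $\Lambda_0$, the product of factors $D_y-(\ln\beta_k)_y$ (with $\alpha_0=0$), annihilates $\xi_1,\dots,\xi_n$. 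This is the standard factorization of the monic operator with kernel $\mathrm{span}\{\xi_j\}$ via ratios of Wronskians, here identified with the $\beta_k$ up to $x$-factors that commute with $D_y$. This matches the result quoted from \cite{St25}.

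\textbf{Step 3: direct verification for general $i$.} I would show $D_y(w)=0$ for $w$ in \eref{abei}. Write $\psi=\rme^{-u}$ and note that $D_y$ of a factor $D_x+\beta_i\psi-(\ln\alpha_k)_x$ acting on $\Upsilon$-type expressions can be commuted past each factor. Using $u_{xy}$ from \eref{abe}, one gets $D_y(D_x+\beta_i\psi-c_k)=(D_x+\beta_i\psi-c_k)D_y+{}$(multiplication operator). The aim is an intertwining relation $D_y\circ\Upsilon_i(\psi)=\Upsilon_i(\psi)\cdot(\text{explicit factor})+\dots$. From this, $D_y\ln\Upsilon_i(\psi)$ equals $-D_y(\beta_i\psi)$ up to a total $x$-derivative that cancels. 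The cancellation is driven by the Toda recursion, which makes the residual multiplication terms telescope along the chain of factors $(\ln\alpha_k)_x=-(\ln\beta_{k-1})_x$. The $y$-integral is handled symmetrically, with the roles of $x$ and $y$, $\alpha$ and $\beta$, and $i$ and $n-i$ interchanged; the truncation $\Delta_n=0$ (so $\beta_n$ is effectively the end of the chain) is what makes $\Lambda_i$ finite.

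\textbf{Step 4: minimality.} The orders of the integrals are $i+1$ and $n-i+1$. I would argue minimality by invoking the substitution result of \cite{Sljm}: the substitutions change the orders of minimal integrals by exactly one in each direction, and the orders are $1$ and $n+1$ at $i=0$, where minimality holds by \eref{wron} and \cite{St25}. Alternatively, one can use Laplace invariants: the order of the minimal $x$-integral equals the length of the Laplace sequence of the linearization until it terminates, and the nonvanishing of $\Delta_k$ prevents earlier termination.

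\textbf{Main obstacle.} I expect the hardest step to be Step 3, specifically organizing the commutation of $D_y$ through the $i$-fold composition $\Upsilon_i$, whose factors all contain the same nonlinear term $\beta_i\rme^{-u}$ but different shifts. I would first try to prove by induction on the number of factors a closed formula for $D_y\bigl(\Upsilon(\psi)\bigr)$ in terms of $\Upsilon(\psi)$ and lower compositions, using the Toda identity at each stage.
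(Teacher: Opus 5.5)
There is a genuine gap. The core of the theorem, that \eref{abei} are minimal integrals for $0<i<n$, is only announced in Step~3, not proved. Since $D_y(\beta_i\rme^{-u})=D_x(u_y-\alpha_i\rme^u)$ on solutions, you have $D_y(w)=D_x\bigl[(D_y-\alpha_i\rme^u)(\rme^u\Upsilon_i(\rme^{-u}))/(\rme^u\Upsilon_i(\rme^{-u}))\bigr]$. So what has to be shown is $(D_y-\alpha_i\rme^u)\bigl(\rme^u\Upsilon_i(\rme^{-u})\bigr)=0$, and symmetrically $(D_x+\beta_i\rme^{-u})\bigl(\rme^{-u}\Lambda_i(\rme^u)\bigr)=0$ for $\bar w$.

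For $i=1$ the first identity is exactly $(\ln\beta_0)_{xy}=\beta_1/\beta_0$. For larger $i$, your induction on the number of factors has no evident inductive hypothesis. Every factor of $\Upsilon_i$ carries the same term $\beta_i\rme^{-u}$, which is tied to equation $i$, so the partial compositions are not the $\Upsilon$ of any equation in the chain. The claimed telescoping is never exhibited.

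The paper inducts on $i$ instead, and that is exactly what the substitution you set aside in Step~2 is for. With $\varphi,\varrho$ as in \eref{fb}, the conjugation $(D_x+\beta_i\rme^{-\varrho}-(\ln\alpha_j)_x)\circ\alpha_{i+1}\rme^v=\alpha_{i+1}\rme^v(D_x+\beta_{i+1}\rme^{-v}-(\ln\alpha_j)_x)$ and \eref{opr} give $\Upsilon_i(\rme^{-\varrho})=\alpha_{i+1}\rme^v\,\Upsilon_{i+1}(\rme^{-v})$ and $\Lambda_i(\rme^u)=-\beta_i^{-1}\rme^u\Lambda_{i+1}(\rme^{\varphi})$. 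Theorem~\ref{ljmt} then carries the minimal integrals of equation $i$ to those in \eref{abei} for equation $i+1$, with orders shifted by one.

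Your Step~4 relies on the same substitution, so it cannot be bypassed. Your suggested shape $v=u+\ln(\cdots u_x\cdots)$ is the index-lowering map: it is $v=u-\ln(\alpha_i\beta_i-((\alpha_i)_x+\alpha_iu_x)\rme^u)$, taking equation $i$ to equation $i-1$. The raising map is $v=\ln((\beta_i)_y+\alpha_i\beta_i\rme^u-\beta_iu_y)$. The computation of Lemma~\ref{trall}, valid for arbitrary $\alpha$ and $\beta\ne0$, shows it produces $\alpha_{i+1}=1/\beta_i$ and $\beta_{i+1}=\beta_i((\ln\beta_i)_{xy}+\alpha_i\beta_i)$. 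Your Toda relation then matches these with \eref{delt}.

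Second, the ``Jacobi route'' for $\Delta_i\ne0$ ($i<n$) in Step~1 breaks after one step. If $p$ is the least index with $\Delta_p=0$, the identity at level $p$ gives $\Delta_{p+1}=0$. At level $p+1$ it reads $\Delta_{p+2}\Delta_p=\Delta_{p+1}(\Delta_{p+1})_{xy}-(\Delta_{p+1})_x(\Delta_{p+1})_y$, i.e.\ $0=0$, so nothing propagates to $\Delta_{n-1}$. The Cauchy--Binet alternative is also unsupported: a sum of products of nonzero Wronskians can vanish unless you prove that the sub-Wronskians are linearly independent.

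What you need is that a minimal vanishing $\Delta_p$ forces $\beta$ to be a sum of $p$ separable products. Your own Step~2 observation gives this. $\Delta_k$ is exactly the $y$-Wronskian of $\beta,\beta_x,\dots,\partial_x^k\beta$, so $\Delta_p=0\ne\Delta_{p-1}$ yields $\partial_x^p\beta=\sum_{b<p}c_b(x)\,\partial_x^b\beta$. Hence $\beta$ is a sum of $p$ separable terms, and if $p<n$ then $\Delta_{n-1}=0$ by Cauchy--Binet, a contradiction. The paper argues instead through $H_p=0$ (Lemma~\ref{hid}) and the $x$-independent operator of Lemma~\ref{lint}, which must annihilate all $n$ independent $\xi_j$.

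Once this is repaired, your Frobenius-factorization argument for $i=0$ is sound and neater than the paper's. With the basis $\beta,\beta_x,\dots,\partial_x^{n-1}\beta$ of $\mathrm{span}\{\xi_j\}$, the successive Wronskians are exactly $\Delta_0,\dots,\Delta_{n-1}$; no ``$x$-factors'' are involved. So $\Lambda_0$ is the unique monic operator with that kernel, and its coefficients depend only on $y$.
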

Along the way, we also prove that the equations~\eref{abe} exhaust all Darboux integrable equations of the form~\eref{exe} (see Lemma~\ref{trall} below). 

In the generic case, the equations~\eref{abe} are very likely new because they cannot be related via point transformations to any other Darboux integrable equations known to the author. Accordingly, Theorem~\ref{main} indicates that new Darboux integrable equations~\eref{hyp} may be found not only among those possessing first-order integrals in one of the characteristics (as it was demonstrated in~\cite{St25}), but also in the domain of equations with minimal integrals of order 2 and higher in both characteristics. This suggests that the complete list of Darboux integrable equations may be substantially larger than what is currently known.
 
\section{Equations with first-order integrals in one of the characteristics} 
It is easy to see that the equation~\eref{seq} coincides with the equation~\eref{abe} when $i=0$, and the condition $\Delta_{n-1} \ne 0$ coincides with \eref{wron}. In this section we show that the operator $\Lambda_0$ can serve as an operator $\Lambda$ defining the $y$-integral of~\eref{seq} (i.e., that $\Lambda_0$ annihilates all $\xi_i$ and its coefficients do not depend on $x$). A part of the reasoning (e.g., Lemmas~\ref{l1} and \ref{l2}) is not necessary for this purpose, but allows us to later prove that the equations~\eref{abe} are the only Darboux integrable equations in the class~\eref{exe}. 
Accordingly, we initially make no assumptions on $\beta$.

\begin{lemma}\label{l1}
An equation of the form~\eref{exe} admits a first-order $x$-integral if and only if $\alpha=0$.
\end{lemma}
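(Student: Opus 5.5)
We prove the two directions separately, by direct computation with the total derivatives \eref{dx}, \eref{dy}. Throughout, $F=D_x(\alpha\rme^u)+D_y(\beta\rme^{-u})=(\alpha_x+\alpha u_1)\rme^u+(\beta_y-\beta\bar u_1)\rme^{-u}$.

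\emph{Sufficiency.} If $\alpha=0$, the equation reads $u_{xy}=D_y(\beta\rme^{-u})$. Since $D_y(u_x)=u_{xy}$, we get $D_y(u_1-\beta\rme^{-u})=0$. Thus $w=u_1-\beta\rme^{-u}$ is a first-order $x$-integral.

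\emph{Necessity.} Let $w(x,y,u,u_1)$ with $w_{u_1}\ne 0$ satisfy $D_y(w)=0$. By \eref{dy} this means
\[
w_y+\bar u_1 w_u+F\,w_{u_1}=0 .
\]
Since $w$ does not depend on $\bar u_1$ and $F$ is affine in $\bar u_1$, the coefficients of $\bar u_1^1$ and $\bar u_1^0$ must vanish separately:
\[
w_u-\beta\rme^{-u}w_{u_1}=0,\qquad w_y+\bigl((\alpha_x+\alpha u_1)\rme^u+\beta_y\rme^{-u}\bigr)w_{u_1}=0 .
\]
The first equation is a linear first-order PDE in the variables $(u,u_1)$, with $x,y$ as parameters. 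Its characteristics are the level sets of $p:=u_1-\beta\rme^{-u}$, so $w=\phi(x,y,p)$ with $\phi_p\ne0$. Substituting into the second equation gives
\[
\phi_y-\beta_y\rme^{-u}\phi_p+\bigl((\alpha_x+\alpha u_1)\rme^u+\beta_y\rme^{-u}\bigr)\phi_p=0 .
\]
The $\beta_y$ terms cancel, leaving
\[
\phi_y+(\alpha_x+\alpha u_1)\rme^u\phi_p=0 .
\]
Now use $(u,p)$ as independent variables, with $u_1=p+\beta\rme^{-u}$. Then $(\alpha_x+\alpha u_1)\rme^u=(\alpha_x+\alpha p)\rme^u+\alpha\beta$. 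The functions $\phi_y$ and $\phi_p$ do not depend on $u$, while the left side contains the term $(\alpha_x+\alpha p)\rme^u\phi_p$. Comparing coefficients of $\rme^u$, which is independent of $1$ as a function of $u$, gives $(\alpha_x+\alpha p)\phi_p=0$. Since $\phi_p\ne0$, we get $\alpha_x+\alpha p=0$ identically in $p$, and hence $\alpha=0$.

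The only delicate step is checking that the change of variables $(u,u_1)\to(u,p)$ is legitimate and that the coefficient splitting is valid. The bookkeeping is routine and poses no real obstacle. This also confirms that \eref{seq} (the case $\alpha=0$) has exactly the $x$-integral stated in the Introduction.
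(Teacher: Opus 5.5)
Your proof is correct and follows essentially the same route as the paper. Both arguments reduce $w$ to a function of $x$, $y$ and $p=u_1-\beta\rme^{-u}$, then show that $D_x(\alpha\rme^u)$ must depend only on $x,y,p$, which forces $\alpha=0$. The only cosmetic difference is that you differentiate in $u$ at fixed $p$, whereas the paper computes the equivalent Jacobian $\alpha_x\rme^u+\alpha\rme^u u_x-\alpha\beta$.
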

\begin{proof} We can rewrite any function $w(x,u,u,u_x)$ in the form $\theta(x,y,u,\varphi)$, where $\varphi=u_x-\beta \rme^{-u}$. Equation~\eref{exe} implies $D_y(\varphi)=\psi$, where $\psi = D_x(\alpha \rme^u)$. If $\theta$ is an $x$-integral of~\eref{exe}, then $D_y(\theta) = \theta_y + u_y \theta_u +\psi \theta_\varphi = 0$. This equality means that $\theta_u=0$ and $\psi$ is a function of $x$, $y$, $\varphi$. The latter implies
\[ \varphi_{u_x} \psi_u - \varphi_u \psi_{u_x} = \alpha_x \rme^u + \alpha \rme^u u_x - \alpha \beta = 0, \] 
but this can be true only if $\alpha=0$. The equation~\eref{exe} obviously 
has the $x$-integral $w=u_x-\beta \rme^{-u}$ if $\alpha=0$. 
\end{proof}

\begin{lemma}\label{l2}
An equation of the form 
\begin{equation}\label{be}
u_{xy}=D_y(\beta(x,y) \rme^{-u}), \qquad \beta \ne 0,
\end{equation}
admits a $y$-integral only if the linear equation 
\begin{equation}\label{le}
v_{xy} = \frac{\beta_y}{\beta}  v_x
\end{equation}
admits a $y$-integral.
\end{lemma}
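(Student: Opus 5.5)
The plan is to reduce the existence of a $y$-integral of \eref{be} to a rank condition on a family of vector fields in the variables $\bar u_1,\dots,\bar u_m$. We then pass to a scaling limit in which this family becomes the one attached to the linear equation \eref{le}.

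\emph{Step 1 (eliminate $u$ and split off $x$).} Let $\bar w(x,y,u,\bar u_1,\dots,\bar u_m)$ be a $y$-integral of \eref{be}, where $F=D_y(\beta\rme^{-u})$. Then $D_y^{i-1}(F)=D_y^i(\beta \rme^{-u})$ does not contain $u_x$. The only occurrence of $u_x$ in $D_x(\bar w)$ is therefore $u_x\bar w_u$, so $\bar w_u=0$. Next we write $D_y^i(\beta\rme^{-u})=\rme^{-u}P_i$. Here $P_i$ is a polynomial in $\bar u_1,\dots,\bar u_i$ with coefficients depending on $x,y$, given by $P_0=\beta$ and $P_{i+1}=D_y(P_i)-\bar u_1P_i$. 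Since $\bar w$ is independent of $u$, the identity $D_x(\bar w)=0$ splits in $\rme^{-u}$ into $\bar w_x=0$ and $X_x(\bar w)=0$, where
\[
X_x=\sum_{i=1}^m P_i(x,y,\bar u)\frac{\partial}{\partial\bar u_i}.
\]
This must hold for every value of $x$; $y$ enters only as a parameter. Hence $\bar w$ is a common first integral of the family $\{X_x\}$, with $\bar w_{\bar u_m}\ne0$. By the Frobenius theorem this forces the distribution generated by $\{X_x\}$ and all their commutators to have rank $r<m$ on an open set.

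\emph{Step 2 (the linear equation).} For \eref{le} the same computation is done with $v$ in place of $u$. Here $D_y^{i-1}(v_{xy})=\frac{\partial_y^i\beta}{\beta}v_x$, because $D_y(v_x)=\frac{\beta_y}{\beta}v_x$. So a function $\bar W(y,\bar v_1,\dots,\bar v_m)$ is a $y$-integral as soon as
\[
\sum_{i=1}^m \partial_y^i\beta(x,y)\,\bar W_{\bar v_i}=0 \quad\text{for all } x.
\]
In particular, suppose the vectors $(\partial_y\beta,\dots,\partial_y^m\beta)(x,y)$, taken over all $x$, span fewer than $m$ dimensions. Then we can pick $\lambda_i(y)$ with $\sum\lambda_i\partial_y^i\beta=0$ and $\lambda_m\ne0$ after reordering to the highest nonzero index. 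In that case $\bar W=\sum\lambda_i(y)\bar v_i$ is a $y$-integral of \eref{le}.

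\emph{Step 3 (scaling limit).} Grade $\bar u_j$ with weight $1$. From the recursion, $P_i=\partial_y^i\beta+(\text{terms of positive degree in }\bar u)$. Rescale $\bar u\mapsto\varepsilon\bar u$. Then $\varepsilon X_x$ tends to the constant-coefficient field $Y_x=\sum_i\partial_y^i\beta\,\partial_{\bar u_i}$ as $\varepsilon\to0$. The commutators of the rescaled fields, multiplied by suitable powers of $\varepsilon$, have limits that are constant-coefficient fields as well, and constant-coefficient fields commute. Rank is lower semicontinuous, so limits can only drop it. Consequently the span of $\{Y_x\}_x$ has dimension at most $r<m$. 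Step 2 then produces the required $y$-integral of \eref{le}.

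The main obstacle is making Step 3 rigorous. The rank bound must be applied to the involutive closure, and this has to be done uniformly in the scaling. I would first try to avoid the closure altogether: fix generic points $x_1,\dots,x_m$ and show directly that $\det\bigl(P_i(x_j,y,\bar u)\bigr)_{i,j\le m}$ vanishes identically. This follows because $\bar w_{\bar u}\ne0$ lies in the kernel of this matrix. Setting $\bar u=0$ in this polynomial identity gives $\det(\partial_y^i\beta(x_j,y))=0$ for all choices of the $x_j$, which is exactly the rank condition of Step 2. This shortcut actually replaces Step 3 entirely, so the scaling argument is needed only as a fallback.
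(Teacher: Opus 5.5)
Your proof is correct if you make the determinant shortcut the main argument, and it takes a genuinely different route from the paper. The paper shares only your first step, $\bar w_u=0$. It then uses the known differential substitution $u=\ln(\beta v_x/v)$, which maps solutions of \eref{le} to solutions of \eref{be}. It pulls the given integral back to $\bar w(x,y,D_y(\ln(\beta v_x/v)),\dots,D_y^m(\ln(\beta v_x/v)))$, and checks via $D_y(\ln(\beta v_x/v))=2\beta_y/\beta-v_y/v$ that this still depends on $y$-derivatives of $v$.

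You instead work with the determining equation itself:
\begin{itemize}
\item Splitting in $\rme^{-u}$ gives $\bar w_x=0$, so a single kernel vector $(\bar w_{\bar u_1},\dots,\bar w_{\bar u_m})$ serves all rows $x_1,\dots,x_m$.
\item Since the $P_i$ are polynomial in $\bar u$, the identity $\det(P_i(x_j,y,\bar u))=0$ extends from the open set where $\bar w_{\bar u_m}\neq0$ to $\bar u=0$, where $P_i=\partial_y^i\beta$.
\item The resulting rank deficiency of $\{(\partial_y\beta,\dots,\partial_y^m\beta)(x,y)\}_x$ yields the linear integral $\sum_i\lambda_i(y)\bar v_i$.
\end{itemize}

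The paper's argument is shorter and preserves the order $m$ of the integral. It is also an instance of a general mechanism that works for any substitution of this type. Yours is self-contained and needs no substitution. It also gives more explicit information: $\beta_y$ satisfies a linear ODE in $y$ with $x$-independent coefficients. Hence $\beta$ is already a finite sum of separable terms, which is the form the paper obtains only later via Laplace invariants in Lemma~\ref{lint}. The cost is that your argument relies on the special structure $D_y^i(\beta\rme^{-u})=\rme^{-u}P_i$ and does not generalize.

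In a write-up, drop Step~3. The commutators play no role, and the rank bound it would need near $\bar u=0$ is exactly what polynomiality in $\bar u$ supplies. Instead, add one line explaining how to choose $\lambda(y)$ smoothly with $\lambda_k\neq0$ on an open $y$-interval where the rank of $\{(\partial_y\beta,\dots,\partial_y^m\beta)(x,y)\}_x$ is constant.
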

In fact, the above lemma is a highly specific case of a more  general statement (see Theorem~2 in~\cite{Sljm}). For the reader's convenience, we give an independent proof of the lemma. 
\begin{proof} Let $\bar{w}$ be a $y$-integral of order $m$ for~\eref{be}. Then 
\[ D_x(\bar{w}) = \bar{w}_x + u_x \bar{w}_u  + \sum^{m}_{i=1} D^{i}_{y}(\beta(x,y) \rme^{-u}) \bar{w}_{\bar{u}_i} =0. \]
Differentiating the last equality with respect to $u_x$, we obtain $\bar{w}_u=0$. As noted in~\cite{SvSok}, the differential 
substitution $u=\ln(\beta v_x /v)$ maps solutions of the linear equation $v_{xy} = \beta_y v_x / \beta - \alpha \beta v$ into solutions of~\eref{exe}. The $y$-integral is a function only of $y$ for any solution of~\eref{be} and, in particular, for all solutions $u=\ln(\beta v_x /v)$ obtained from those of~\eref{le}. This is why 
\[ \bar{\theta} = \bar{w}(x,y,D_y(\ln(\beta v_x /v)),\dots,D_y^m(\ln(\beta v_x /v))) \]
is a function only of $y$ for any solution of~\eref{le}. To prove that $\bar{\theta}$ is a $y$-integral of~\eref{le}, it remains to ensure that $\bar{\theta}$ depends on derivatives of $v$ with respect to $y$. This follows from the fact that $D_y(\ln(\beta v_x /v))= 2\beta_y /\beta - v_y/v$ on solutions of~\eref{le}.  
\end{proof}

Laplace transformations (see, for example, \cite{Tr}) can be used to construct the integrals of linear equations $L(u)=0$, where
\[ L= D_y D_y - A(x,y) D_x - B(x,y) D_y - C(x,y). \] 
This is done as follows.  We rewrite the operator $L$ as
\begin{equation}\label{delim}
L= (D_x - B) \circ (D_y - A) - H_1, \qquad H_1 = C + A B - A_x.
\end{equation}
Starting with $L_0=L$, we construct the sequence of operators
\begin{equation}\label{lid}
L_i := (D_y - A_i) \circ (D_x - B)  - H_i = (D_x - B) \circ (D_y - A_i) - H_{i+1},
\end{equation}
where $A_0=A$, $A_i= A_{i-1} + (\ln H_i)_y$, $H_{i+1}=H_i + B_y - (A_i)_x$. 
A direct calculation shows that
\begin{equation}\label{cint}
(D_y - A_i) \circ L_{i-1} = L_i \circ (D_y - A_{i-1}). 
\end{equation}
It is known (see, for example, \cite{AK}) that $H_n=0$ for some $n \le p$ if $L(u)=0$ has a $y$-integral of order $p$. Conversely, if $H_n=0$ and $n>1$, then~\eref{cint} and~\eref{lid} imply
\begin{eqnarray*}
\fl \qquad \qquad (D_y - A_{n-1}) \circ \dots \circ (D_y - A_1) \circ L &= L_{n-1} \circ (D_y - A_{n-2}) \circ \dots \circ (D_y - A_0) \\
\fl = (D_x - B) \circ (D_y - A_{n-1}) \circ \dots \circ (D_y - A_0) &= \mu^{-1} D_x \circ \mu (D_y - A_{n-1}) \circ \dots \circ (D_y - A_0),
\end{eqnarray*}
where $\mu_x = - B \mu$. Taken together, the above facts (including~\eref{delim} in the case $n=1$) imply that \mbox{$\mu (D_y - A_{n-1}) \dots (D_y - A_0)(u)$} is a minimal $y$-integral of the linear equation. Applying this procedure to~\eref{le}, we obtain the unnumbered part and item~1) of the following statement.

\begin{lemma}\label{lint}
If equation~\eref{le} admits a $y$-integrals, then $H_n=0$ for some $n>0$, where 
\begin{equation}\label{hi}
H_1 = - (\ln \beta)_{xy}, \qquad H_{i+1} = H_i - \left( \ln (\beta H_1 \dots H_i) \right)_{xy}, \quad i>0.
\end{equation}

Conversely, if $H_n=0$, then:

 1) $\Lambda_0 (v)$ is a minimal $y$-integral for~\eref{le}, where 
\begin{equation}\label{lamb}
\fl \Lambda_0=(D_y - A_{n-1}) \circ \dots \circ (D_y - A_0), \qquad A_0=(\ln \beta)_y, \quad A_i=\left( \ln (\beta H_1 \dots H_i) \right)_y, \quad i>0; 
\end{equation}

2) When $\Lambda_0$ is rewritten in the form $\sum_{i=0}^{n} \lambda_i D_y^i$,  its coefficients $\lambda_i$ depend only on $y$;

3) The function $\beta$ has the form $\sum_{i=1}^{n} \xi_i(y) \eta_i(x)$, where $\Lambda_0(\xi_i)=0$.
\end{lemma}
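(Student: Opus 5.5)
The unnumbered part and item~1) will come directly from the Laplace procedure described just before the lemma. I would apply it to the linear equation~\eref{le}, written as $L(v)=0$ with $L=D_xD_y - A D_x - B D_y - C$, $A=\beta_y/\beta$, $B=0$, $C=0$. Then~\eref{delim} gives $H_1=C+AB-A_x=-(\ln\beta)_{xy}$. The recursions $A_i=A_{i-1}+(\ln H_i)_y$ and $H_{i+1}=H_i+B_y-(A_i)_x$ telescope to $A_i=(\ln(\beta H_1\cdots H_i))_y$ and to~\eref{hi}. The cited fact from~\cite{AK} says that a $y$-integral of order $p$ forces $H_n=0$ for some $n\le p$, which is the unnumbered part. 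Conversely, if $H_n=0$ (with $H_1,\dots,H_{n-1}\neq 0$, so that the $A_i$ are defined), the factorization argument above gives that $\mu\,\Lambda_0(v)$ is a minimal $y$-integral. Here $\mu_x=-B\mu=0$, so $\mu$ depends only on $y$ and can be taken to be $1$. This proves item~1).

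For item~2), I expand $\Lambda_0=\sum_{i=0}^n\lambda_i D_y^i$ with $\lambda_n=1$; the coefficients are functions of $x,y$ obtained from the $A_j$. On solutions of~\eref{le}, the derivatives $v, v_y, v_{yy},\dots$ together with $v_x, v_{xx},\dots$ are free coordinates. Every mixed derivative is expressed through $v_x$ alone, because $(v_x/\beta)_y=0$ gives $D_y^i(v_x)=\beta_{y^i}\,v_x/\beta$. Hence
\[
D_x(\Lambda_0(v))=\sum_{i=0}^n(\lambda_i)_x\,\frac{\partial^i v}{\partial y^i}+\frac{v_x}{\beta}\sum_{i=0}^n\lambda_i\frac{\partial^i\beta}{\partial y^i}.
\]
Since $\Lambda_0(v)$ is a $y$-integral, this expression vanishes identically. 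The coefficients of the independent variables $\partial^i v/\partial y^i$ give $(\lambda_i)_x=0$, which is item~2). The coefficient of $v_x$ gives $\sum_i\lambda_i(y)\,\partial_y^i\beta=0$, i.e.\ $\Lambda_0(\beta)=0$ for every fixed $x$.

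For item~3), $\Lambda_0$ is a linear ordinary differential operator in $y$ of order $n$ with leading coefficient $1$ and coefficients depending on $y$ only. Its kernel is therefore spanned by $n$ functions $\xi_1(y),\dots,\xi_n(y)$ with nonzero Wronskian. For each $x$, the function $\beta(x,\cdot)$ lies in this kernel, so $\beta=\sum_i\eta_i(x)\xi_i(y)$ with $\Lambda_0(\xi_i)=0$. The only point needing care is the regularity of the $\eta_i$. I would handle it by solving the linear system formed by $\beta,\beta_y,\dots,\partial_y^{n-1}\beta$ against the Wronskian matrix of the $\xi_i$ at a fixed $y_0$. This expresses each $\eta_i$ through derivatives of $\beta$ in the same smoothness class. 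I expect this step, and the bookkeeping that the $A_i$ are well defined (nonvanishing $H_i$ for $i<n$), to be the only delicate spots; the rest is direct computation.
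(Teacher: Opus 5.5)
Your proposal is correct and follows essentially the paper's argument. The unnumbered part and item~1) come from the Laplace-transformation procedure with $A=\beta_y/\beta$, $B=C=0$ (so $\mu$ may be taken as $1$); item~2) comes from computing $D_x(\Lambda_0(v))$ and reading off the coefficients of $\partial^i v/\partial y^i$; and item~3) comes from $\beta$ lying in the kernel of the $y$-only operator $\Lambda_0$. The only minor deviation is that you obtain $\Lambda_0(\beta)=0$ from the $v_x$-coefficient, whereas the paper uses the rightmost factor $(D_y-A_0)(\beta)=0$; both are valid.
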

\begin{proof} 
In view of the reasoning just before the lemma, it remains to prove items~2) and 3).

Let $\bar{v}_i$ denote $\partial^i v /\partial y^i$. Since $\Lambda_0(v)= \sum_{i=0}^n \lambda_i(x,y) \bar{v}_i$ is a $y$-integral of~\eref{le}, we have
\[ D_x(\Lambda_0(v)) = \sum_{i=0}^n (\lambda_i)_x \bar{v}_i + \left(\lambda_0 + \sum_{i=1}^n \lambda_i \left( D_y + \frac{\beta_y}{\beta} \right)^{i-1} \left( \frac{\beta_y}{\beta} \right) \right) v_x = 0.\]
Hence, $(\lambda_i)_x=0$ for all $i=\overline{0,n}$.

Because $(D_y-A_0)(\beta)=(D_y-\beta_y/\beta)(\beta)=0$, the function $\beta(x,y)$ belongs to the kernel of $\Lambda_0$. Therefore, $\beta=\sum_{i=1}^{n} \xi_i(y) \eta_i(x)$, where $\xi_i$ are $n$ linearly independent solutions of the equation $\Lambda_0(\xi)=0$.
\end{proof}

For further reasoning we employ the relation (see \cite{Darb}, p.~129)
\begin{equation}\label{darf}
(\ln \Delta_j)_{xy} = \frac{\Delta_{j-1} \Delta_{j+1}}{\left(\Delta_j\right)^2}, \qquad j>0, 
\end{equation}
where $\Delta_0=\beta$ and the remaining $\Delta_i$ are defined by \eref{delt}. This formula can be easily derived by using the Desnanot-Jacoby identity (a particular case of the Sylvester's determinant identity)
\[ M_{j+1}^{j+1} M_{j+2}^{j+2} - M_{j+2}^{j+1} M_{j+1}^{j+2} = M_{j+1,j+2}^{j+1,j+2} M , \]
where $M$ is the determinant of a $(j+2) \times (j+2)$ matrix, $M_p^q$ is the minor of the same matrix (the determinant of the matrix after deleting row $p$ and column $q$) and $M_{j+1,j+2}^{j+1,j+2}$ is the $j$-th order leading principal minor of the matrix. Indeed, if $M=\Delta_{j+1}$, then
\begin{eqnarray*}
&(\Delta_j)^2 (\ln \Delta_j)_{xy} &= (\Delta_j)_{xy} \Delta_j - (\Delta_j)_x (\Delta_j)_y \\
&&= M_{j+1}^{j+1} M_{j+2}^{j+2} - M_{j+2}^{j+1} M_{j+1}^{j+2} = M_{j+1,j+2}^{j+1,j+2} M  = \Delta_{j-1} \Delta_{j+1}. 
\end{eqnarray*}

\begin{lemma}\label{hid} Let $H_i$ and $\Delta_i$ be defined by \eref{hi} and \eref{delt}, respectively, and $\Delta_0=\beta$. Then
\begin{equation}\label{hd}
H_j= - (\ln \Delta_{j-1})_{xy},  \qquad \beta H_1 \dots H_j = (-1)^j \frac{\Delta_j}{\Delta_{j-1}}
\end{equation}
for all $j$ such that $\Delta_0 \dots \Delta_{j-1} \ne 0$.
\end{lemma}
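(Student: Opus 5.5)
We prove both identities in \eref{hd} together by induction on $j$, using the Darboux relation \eref{darf} as the only nontrivial input. Write $P_j=\beta H_1\dots H_j$, so that $P_0=\beta=\Delta_0$. By \eref{hi}, the recursion takes the form $H_{j+1}=H_j-(\ln P_j)_{xy}$.

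\emph{Base case $j=1$.} The first identity, $H_1=-(\ln\beta)_{xy}=-(\ln\Delta_0)_{xy}$, is just the definition in \eref{hi}. For the second, we must show $\beta H_1=-\Delta_1/\Delta_0$, which amounts to $-\beta^2(\ln\beta)_{xy}=-\Delta_1$. Since $\Delta_1=\beta\beta_{xy}-\beta_x\beta_y=\beta^2(\ln\beta)_{xy}$, this is immediate. It is also the case $j=0$ of \eref{darf} once we set $\Delta_{-1}=1$.

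\emph{Inductive step.} Assume both identities hold for $j$, and that $\Delta_0\dots\Delta_j\ne0$.
\begin{itemize}
\item[(a)] The inductive hypothesis gives $P_j=(-1)^j\Delta_j/\Delta_{j-1}$. Hence $(\ln P_j)_{xy}=(\ln\Delta_j)_{xy}-(\ln\Delta_{j-1})_{xy}$, because the constant sign disappears under differentiation of the logarithm.
\item[(b)] Substituting this and $H_j=-(\ln\Delta_{j-1})_{xy}$ into the recursion, the two $(\ln\Delta_{j-1})_{xy}$ terms cancel and we get $H_{j+1}=-(\ln\Delta_j)_{xy}$. This is the first identity for $j+1$.
\item[(c)] By \eref{darf}, valid for $j>0$ since $\Delta_j\neq0$, we have $H_{j+1}=-\Delta_{j-1}\Delta_{j+1}/\Delta_j^2$.
\item[(d)] Therefore
\[ P_{j+1}=P_jH_{j+1}=(-1)^{j}\frac{\Delta_j}{\Delta_{j-1}}\cdot\left(-\frac{\Delta_{j-1}\Delta_{j+1}}{\Delta_j^{2}}\right)=(-1)^{j+1}\frac{\Delta_{j+1}}{\Delta_j}, \]
which is the second identity.
\end{itemize}

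One point needs care. The hypothesis for index $j+1$ requires only $\Delta_0\dots\Delta_j\ne0$, and this is exactly what makes every logarithm and quotient above well defined. The quantity $\Delta_{j+1}$ may vanish, and that is harmless: it appears only in a numerator. Apart from keeping track of these nonvanishing conditions, the proof is a routine telescoping computation. The substantive input is the Desnanot–Jacoby identity behind \eref{darf}, which has already been established.
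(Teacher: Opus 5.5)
Your proof is correct and is essentially the paper's argument: induction on $j$ from the base case $j=1$, cancellation of the $(\ln\Delta_{j-1})_{xy}$ terms in the recursion \eref{hi} to get $H_{j+1}=-(\ln\Delta_j)_{xy}$, and then the Darboux relation \eref{darf} to pass from $\beta H_1\dots H_j$ to $\beta H_1\dots H_{j+1}$. Your explicit tracking of which $\Delta_i$ must be nonzero, with $\Delta_{j+1}$ appearing only in a numerator, is a slightly more careful statement of what the paper leaves implicit.
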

\begin{proof} 
Obviously, the equalities~\eref{hd} hold for $j=1$. Using~\eref{darf} and induction on $j$, we obtain
\[ \fl H_{j+1} = H_j - \left( \ln (\beta H_1 \dots H_j) \right)_{xy} = - (\ln \Delta_{j-1})_{xy} - \left(\ln \left(\frac{\Delta_j}{\Delta_{j-1}}\right) \right)_{xy} = - (\ln \Delta_j)_{xy}, \]
\[ \fl \beta H_1 \dots H_{j+1} = (-1)^{j+1} \frac{\Delta_j}{\Delta_{j-1}} (\ln \Delta_j)_{xy} = (-1)^{j+1} \frac{\Delta_j}{\Delta_{j-1}} \frac{\Delta_{j-1} \Delta_{j+1}}{\left(\Delta_j\right)^2} = (-1)^{j+1} \frac{\Delta_{j+1}}{\Delta_j}. \qedhere \]
\end{proof}
\begin{theorem}\label{foi}
An equation of the form~\eref{exe} admits a $y$-integral and a first-order $x$-integral if and only if $\alpha =0$ and for some $n>0$ the function $\beta$ can be represented in the form $\sum_{i=1}^{n} \xi_i(y) \eta_i(x)$ such that $\Delta_{n-1} \ne 0$, where the determinants $\Delta_i$ are defined by \eref{delt}. The above representation of $\beta$ guarantees that $\Delta_i \ne 0$ for all $i \le n-1$ and that the minimal integrals of the equation are
\[w=u_x - \sum_{i=1}^{n} \xi_i(y) \eta_i(x) \rme^{-u}, \quad \bar{w} = D_y (\ln \Lambda_0 (\rme^u)), \]
where $\Lambda_0$ is defined by~\eref{lami} with $i=0$. 
\end{theorem}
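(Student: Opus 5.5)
Necessity comes first. Suppose \eref{exe} has a first-order $x$-integral and some $y$-integral. Lemma~\ref{l1} gives $\alpha=0$, so the equation is \eref{be}, with $\beta\ne0$ by the condition in \eref{exe}. Lemma~\ref{l2} then says the linear equation \eref{le} has a $y$-integral. By the unnumbered part of Lemma~\ref{lint}, $H_n=0$ for some $n>0$; I take $n$ to be the smallest such index, so $H_1,\dots,H_{n-1}\ne0$.

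Next I convert this into determinant language with Lemma~\ref{hid}, by induction on $j$. Since $\Delta_0=\beta\ne0$, the lemma applies for $j=1$. Suppose $\Delta_0,\dots,\Delta_{j-1}\ne0$ and $j<n$. Then $\beta H_1\cdots H_j=(-1)^j\Delta_j/\Delta_{j-1}$, and the left side is nonzero, so $\Delta_j\ne0$. This gives $\Delta_0,\dots,\Delta_{n-1}\ne0$, and in particular $\Delta_{n-1}\ne0$. Item~3) of Lemma~\ref{lint} supplies the representation $\beta=\sum_{i=1}^n\xi_i(y)\eta_i(x)$. The same identification gives $A_i=(\ln\beta_{i})_y$ up to sign, since the factor $(-1)^i$ disappears under $\ln(\cdot)_y$. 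Here $\beta_i=\Delta_i/\Delta_{i-1}$ and $A_0=(\ln\beta_0)_y$. So the operator \eref{lamb} equals \eref{lami} with $i=0$ and $\alpha_0=0$.

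For sufficiency, let $\beta=\sum_{i=1}^n\xi_i\eta_i$ with $\Delta_{n-1}\ne0$. For every $j\ge n$, the rows of $\Delta_j$ are $\partial_y^k$ applied to $(\beta,\beta_x,\dots,\partial_x^j\beta)$. This vector is $\sum_i\xi_i^{(k)}(\eta_i,\eta_i',\dots,\eta_i^{(j)})$, so every row lies in a span of dimension at most $n$, and $\Delta_j=0$ for all $j\ge n$. To show $\Delta_i\ne0$ for $i<n$, I use the factorization $\Delta_{n-1}=W(\xi)\,W(\eta)$ into the two Wronskians in \eref{wron}. Then $\Delta_{n-1}\ne0$ forces both $\xi_i$ and $\eta_i$ to be linearly independent, but that does not immediately give $\Delta_i\ne0$ for smaller $i$. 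Instead I argue by contradiction with \eref{darf}. Suppose $\Delta_j=0$ identically for some $j<n-1$, and take $j$ minimal, so $\Delta_{j-1}\ne0$. Applying \eref{darf} at index $j-1$ gives $\Delta_{j-2}\Delta_j=\Delta_{j-1}^2(\ln\Delta_{j-1})_{xy}$. To propagate the vanishing upward to $\Delta_{n-1}$, I would rather use a rank argument. $\Delta_j=0$ identically says the $(j+1)$ functions $\partial_x^k\beta$, viewed as functions of $y$, have vanishing Wronskian. Hence they are linearly dependent over functions of $x$ on an open set, so $\beta$ satisfies an $x$-ODE of order $j$ with coefficients independent of $y$. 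That would make $\dim\operatorname{span}_x\{\beta(\cdot,y)\}\le j<n$, contradicting the independence of the $\xi_i$ and the $\eta_i$. I expect this step to be the main technical obstacle, namely justifying the Wronskian-implies-dependence argument, and I would handle it locally, on domains where the relevant minors are nonvanishing.

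With $\Delta_0,\dots,\Delta_{n-1}\ne0$ and $\Delta_n=0$, Lemma~\ref{hid} gives $H_n=-(\ln\Delta_{n-1})_{xy}=-\Delta_{n-2}\Delta_n/\Delta_{n-1}^2=0$ by \eref{darf}; for $n=1$ one checks $H_1=-(\ln\xi\eta)_{xy}=0$ directly. Lemma~\ref{lint} then shows that $\Lambda_0(v)$ is a minimal $y$-integral of \eref{le} and that $\Lambda_0$ has $y$-dependent coefficients annihilating the $\xi_i$. The integrals of \eref{be} follow. The $x$-integral is $w=u_x-\beta\rme^{-u}$, by Lemma~\ref{l1}. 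For $\bar w=D_y\ln\Lambda_0(\rme^u)$, I verify $D_x\bar w=0$ directly. Write $\rme^u=\beta v_x/v$, or more simply, on solutions of \eref{be}, compute $D_x(\rme^u)=\rme^u u_x$ and $D_x\bar u_k$. Since $u_x-\beta\rme^{-u}=w(x)$, we get $(\rme^u)_{xy}$-type relations. Minimality: $\bar w$ has order $n$ in $\bar u$, and a lower-order $y$-integral of \eref{be} would produce, via the substitution in Lemma~\ref{l2}, a $y$-integral of \eref{le} of order $<n$, contradicting minimality in Lemma~\ref{lint}; and $w$ has order $1$.
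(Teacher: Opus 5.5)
\textbf{Gap: minimality of $\bar w$.} Your necessity argument and the vanishing of $\Delta_j$ for $j\ge n$ are fine. Your Wronskian/rank argument for $\Delta_i\ne0$, $i<n$, is also a legitimate and more elementary substitute for the paper's route. The paper takes the least $p$ with $\Delta_p=0$, gets $H_p=0$, and observes that the order-$p$ operator $\Lambda_0$ with $x$-independent coefficients kills the $n$ independent $\xi_i$, so $p\ge n$.

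You say $\bar w$ has order $n$. But $\Lambda_0(\rme^u)=\rme^u\bar u_n+\dots$ already has order $n$, so $\bar w=D_y(\ln\Lambda_0(\rme^u))$ has order $n+1$, with leading coefficient $\rme^u/\Lambda_0(\rme^u)$. The transfer of Lemma~\ref{l2} preserves order: on solutions of \eref{le}, $\bar u_1=2(\ln\beta)_y-\bar v_1/v$, and each $\bar u_k$ depends on $v,\bar v_1,\dots,\bar v_k$. So minimality of $\Lambda_0(v)$ only rules out $y$-integrals of \eref{be} of order less than $n$. It says nothing about a $y$-integral of order exactly $n$. For $n=1$ your argument is vacuous, yet one must show that $u_{xy}=D_y(\xi\eta\rme^{-u})$ has no first-order $y$-integral. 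The paper fills precisely this step by citing Example~1 of \cite{St25}.

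\textbf{How to close it in your approach.} You need one extra idea.
\begin{enumerate}
\item The integral transferred to \eref{le} depends on $v$ only through $D_y^k(\ln v)$, $k\ge1$, so it is invariant under $v\mapsto cv$.
\item Any $y$-integral of \eref{le} of order $n$ is a function of $y$ and $I=\Lambda_0(v)$. To see this, write it as $G(x,y,\bar v_0,\dots,\bar v_{n-1},I)$ with $\bar v_0=v$. Since $D_x(\bar v_k)=c_k(x,y)v_x$ with $c_0=1$, the condition $D_xG=0$ gives $G_x=0$ and $\sum_k c_kG_{\bar v_k}=0$ for every $x$. If the vectors $(c_0,\dots,c_{n-1})$, for fixed $y$ and varying $x$, did not span an $n$-dimensional space, some $\sum_k a_k(y)\bar v_k$ would be a $y$-integral of order below $n$, contradicting minimality of $\Lambda_0(v)$. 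Hence $G=G(y,I)$.
\item Scale invariance then gives $G(y,cI)=G(y,I)$ for all $c$, which forces $G_I=0$, a contradiction.
\end{enumerate}

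\textbf{Secondary point.} Your check that $D_x\bar w=0$ is only gestured at. The needed identity is $D_x(\Lambda_0(\rme^u))=\Lambda_0(w\rme^u+\beta)=w\Lambda_0(\rme^u)$. It uses item~2) of Lemma~\ref{lint} and $\Lambda_0(\beta)=0$; the latter holds because the rightmost factor $D_y-(\ln\beta)_y$ kills $\beta$.
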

\begin{proof} According to Lemma~\ref{l1}, the equation admits a first-order $x$-intergral iff $\alpha=0$. If~\eref{exe} has a $y$-integral, then $H_n=0$ and $\beta=\sum_{i=1}^{n} \xi_i(y) \eta_i(x)$ by Lemmas~\ref{l2} and~\ref{lint}. Lemma~\ref{hid} implies that $n$ is the smallest integer for which $\Delta_n=0$. Therefore, $\Delta_{n-1} \ne 0$.

Conversely, if  $\beta=\sum_{i=1}^{n} \xi_i(y) \eta_i(x)$, then $\Delta_n$ is the determinant of a matrix product (like 
the left-hand side of~\eref{wron}) and vanishes by the Cauchy-Binet theorem. 
Therefore, either $H_n=0$ or $\Delta_0 \dots \Delta_{n-1} = 0$ by Lemma~\ref{hid}. Thus, $\Delta_p=0$ and, hence, $H_p=0$ for some $p \le n$. Lemma~\ref{lint} then implies that the coefficients $\lambda_i$ of the operator
$\Lambda_0=(D_y - A_{p-1}) \circ \dots \circ (D_y - A_0)$
do not depend on $x$.  In addition, $\sum_{i=1}^{n} \eta_i \Lambda_0(\xi_i) =\Lambda_0(\beta)=0$ by construction of $\Lambda_0$. Differentiating the last equality $n-1$ times with respect to $x$, we obtain a linear homogeneous system of $n$ equations for $\Lambda_0(\xi_i)$. Since $\Delta_{n-1} \ne 0$ (i.e. the condition~\eref{wron} holds), the matrix of this system is non-degenerate and  $\Lambda_0(\xi_i)=0$ for all $i=\overline{1,n}$. This means that $p=n$ because $\xi_i$ are linearly independent due to~\eref{wron}. In view of~\eref{hd}, formula~\eref{lamb} coincides with~\eref{lami} for $i=0$.

Let us demonstrate that $D_y(\ln \Lambda_0(\rme^u))$ is a $y$-integral of~\eref{seq}. Taking $D_y(w)=0$ and $u_x= w+\sum_{i=1}^{n} \xi_i(y) \eta_i(x) \rme^{-u}$ into account, we obtain
\[ D_x(\Lambda_0(\rme^u))= \Lambda_0(\rme^u u_x) = \Lambda_0( w \rme^u) + \sum_{i=1}^{n} \eta_i \Lambda_0(\xi_i)  = w \Lambda_0(\rme^u),\]
\begin{eqnarray*}
D_x\left(\frac{D_y(\Lambda_0(\rme^u))}{\Lambda_0(\rme^u)} \right) &= \displaystyle \frac{D_y(D_x(\Lambda_0(\rme^u)))}{\Lambda_0(\rme^u)} - \frac{D_y(\Lambda_0(\rme^u)) D_x(\Lambda_0(\rme^u))}{(\Lambda_0(\rme^u))^2}  \\ 
&= \displaystyle \frac{D_y( w \Lambda_0(\rme^u))}{\Lambda_0(\rme^u)} - \frac{D_y(\Lambda_0(\rme^u)) w \Lambda_0(\rme^u)}{(\Lambda_0(\rme^u))^2} = 0. 
\end{eqnarray*}
It was proved in Example~1 of \cite{St25} that the equation~\eref{seq} has no $y$-integrals of order less than $n+1$ under the condition~\eref{wron} (i.e., if $\Delta_{n-1} \ne 0$).
\end{proof}

\section{Transformations preserving Darboux integrability}
Note that the equation~\eref{exe} can be represented\footnote{If for equation~\eref{hyp} there exist $\varphi$ and $\psi$ satisfying~\eref{cl}, then the right-hand side $F$ of the equation is uniquely defined by~\eref{cl} and we can consider~\eref{cl} as a form of writing for~\eref{hyp}.} in the form
\begin{equation}\label{cl}
D_x ( \varphi (x, y, u, u_y) ) = \psi (x,y,u,u_y), \qquad \varphi _{u_y} \psi _u - \varphi _{u} \psi_{u_y} \ne 0.
\end{equation}
Due to this, we can apply a transformation proposed in \cite{Yam,SvSok} to equation~\eref{exe}. This transformation looks as follows.

The relation~\eref{cl} is equivalent to the system
\begin{equation}\label{it}
v=\varphi (x, y, u, u_y), \qquad  v_x = \psi (x,y,u,u_y).
\end{equation}
Solving \eref{it} for $u$, $u_y$, we arrive at a system of the form
\begin{equation}\label{pqs}
u=\varrho (x,y,v,v_x), \qquad u_y = \sigma (x, y, v, v_x).
\end{equation}
Differentiating the first equation of~\eref{pqs} with respect to
$y$ and then comparing the result with the second equation, we obtain
\begin{equation}\label{pq}
D_y ( \varrho (x,y,v,v_x) ) = \sigma (x, y, v, v_x).
\end{equation}
As a result, the differential substitution $v=\varphi (x, y, u, u_y)$ maps solutions of~\eref{cl} into solutions of~\eref{pq}. Repeating the above reasoning in reverse order, we see that the differential substitution $u=\varrho (x,y,v,v_x)$ maps solutions of~\eref{pq} into solutions of~\eref{cl}. 

\begin{theorem}[\cite{Sljm}]\label{ljmt}
Let equation~\eref{hyp} be written  in the form~\eref{cl}. Then~\eref{hyp} admits an $x$-integral $w$ of order $k$ and a $y$-integral $\bar{w}$ of order $m$ if and only if the corresponding equation~\eref{pq} admits an $x$-integral of order $k+1$ and a $y$-integral of order $m-1$. Specifically, the $y$-integral $\bar{w}$ can be written in the form $\bar{\theta} (x,y, \varphi, D_y(\varphi), \dots, D_y^{m-1} (\varphi))$, and the equation~\eref{pq} admits the $y$-integral $\bar{\theta}(x,y,v,\dots,\bar{v}_{m-1})$ and the $x$-integral
\begin{equation}\label{pqy}
\theta = w(x,y,\varrho,D_x(\varrho),\dots, D_x^k(\varrho))
\end{equation}
of orders $m-1$ and $k+1$, respectively. The functions $\theta$ and $\bar{\theta}$ are minimal integrals for~\eref{pq} if $w$ and $\bar{w}$ are minimal integrals for~\eref{hyp}. 
\end{theorem}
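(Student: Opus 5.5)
The plan is to transport integrals along the two differential substitutions $v=\varphi(x,y,u,u_y)$ and $u=\varrho(x,y,v,v_x)$. Then I would use the symmetry of the construction: equation~\eref{pq} is itself of the form~\eref{cl} with the roles of $x$ and $y$ swapped. Applying the first half of the argument to it gives back~\eref{cl}, which yields the converse and the minimality claims.

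\emph{Step 1: rewriting the $y$-integral.} Let $\bar w(x,y,u,\bar u_1,\dots,\bar u_m)$ be a $y$-integral of~\eref{cl}. I would introduce new jet coordinates $(x,y,u,\varphi,D_y(\varphi),\dots,D_y^{m-1}(\varphi))$ in place of $(x,y,u,\bar u_1,\dots,\bar u_m)$. This is a valid change of variables provided $\varphi_{u_y}\ne 0$, because $D_y^{j}(\varphi)=\varphi_{u_y}\bar u_{j+1}+(\text{lower order})$. If $\varphi_{u_y}=0$, I would first use the nondegeneracy in~\eref{cl} to pass to an equivalent pair $(\varphi,\psi)$ with $\varphi_{u_y}\ne0$, or treat that case separately. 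In the new coordinates $D_x(D_y^j\varphi)=D_y^j(\psi)$, and this depends only on $x,y,u$ and the $\bar u_i$, never on $u_x$; $u_x$ enters only through the term $u_x\,\partial_u$. Since $D_x(\bar w)=0$ holds identically and $u_x$ is a free jet variable, differentiating with respect to $u_x$ gives $\partial\bar w/\partial u=0$ in the new coordinates. Hence $\bar w=\bar\theta(x,y,\varphi,\dots,D_y^{m-1}\varphi)$, and $\bar\theta$ depends essentially on its last argument because $\bar w_{\bar u_m}\ne0$.

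\emph{Step 2: integrals of~\eref{pq}.} Take any solution $v$ of~\eref{pq} and set $u=\varrho(x,y,v,v_x)$. This $u$ solves~\eref{cl}, and by~\eref{it}--\eref{pqs} we recover $v=\varphi(x,y,u,u_y)$. Hence $\bar\theta(x,y,v,\dots,\bar v_{m-1})=\bar w|_u$ depends on $y$ only, and likewise $\theta=w(x,y,\varrho,D_x\varrho,\dots,D_x^k\varrho)$ depends on $x$ only. Since solutions of~\eref{pq} realize arbitrary values of any finite jet, these are identities, i.e.\ $D_x\bar\theta=0$ and $D_y\theta=0$. For the orders: $\bar\theta$ has order exactly $m-1$ by Step~1. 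For $\theta$, the nonvanishing Jacobian in~\eref{cl} gives $\varrho_{v_x}\ne0$, so $D_x^k\varrho$ contains $v_{k+1}$ with nonzero coefficient $\varrho_{v_x}$, and $\theta$ has order $k+1$. The degenerate case $m=1$ needs a small convention about ``integrals of order $0$'', and I would check that the statement is read consistently there.

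\emph{Step 3: converse and minimality.} Equation~\eref{pq}, $D_y(\varrho)=\sigma$, is of type~\eref{cl} with $x\leftrightarrow y$, and its associated transformed equation is exactly~\eref{cl}. Applying Steps 1--2 with the roles of the characteristics swapped maps an $x$-integral of order $k+1$ and a $y$-integral of order $m-1$ of~\eref{pq} to integrals of orders $k$ and $m$ of~\eref{cl}. This proves the converse. It also proves minimality: an integral of~\eref{pq} of order lower than $k+1$ or $m-1$ would produce one for~\eref{hyp} of order lower than $k$ or $m$.

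The main obstacle I expect is Step~1. One must justify the change of jet coordinates, in particular handle $\varphi_{u_y}=0$. One must also argue carefully that nothing except the $u_x\partial_u$ term involves $u_x$, so that the conclusion $\bar w_u=0$ in the new variables really follows, as in the proof of Lemma~\ref{l2}.
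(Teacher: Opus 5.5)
Your proposal is correct. The paper gives no proof of this theorem; it is imported from~\cite{Sljm}. Your Steps~1--2 are the same mechanism the paper uses in its proof of Lemma~\ref{l2}: differentiate $D_x(\bar w)=0$ with respect to the free variable $u_x$ to remove the dependence on $u$, then transport integrals along the substitution by evaluating them on solutions. Your Step~3 is the reversibility of \eref{it}--\eref{pq} that the paper notes just before the theorem.

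The two loose ends you flag are in fact vacuous.

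\emph{The case $\varphi_{u_y}=0$ cannot occur.} If $\varphi_{u_y}=0$, then $D_x(\varphi)=\varphi_x+u_x\varphi_u$. This can coincide with $\psi(x,y,u,u_y)$ only if $\varphi_u=0$, which kills the Jacobian in~\eref{cl}. So $\varphi_{u_y}\ne0$ always. It is this, and not the Jacobian alone, that gives $\varrho_{v_x}=\varphi_{u_y}/(\varphi_{u_y}\psi_u-\varphi_u\psi_{u_y})\ne 0$.

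\emph{No order-zero convention is needed.} In fact $m\ge 2$ automatically. For $m=1$, your Step~1 yields $\bar w=\bar\theta(x,y,\varphi)$ with $\bar\theta_x+\bar\theta_\varphi\psi=0$. Since $\varphi$ and $\psi$ are functionally independent, this forces $\bar\theta_\varphi=0$, contradicting $\bar w_{\bar u_1}\ne0$. By the same argument,~\eref{pq} has no first-order $x$-integral, so the issue does not arise in either direction.
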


Applying the above scheme and Theorem~\ref{ljmt} to \eref{exe}, we obtain the following statement. 
\begin{lemma}\label{trall}
If $\beta \ne 0$, then the differential substitution $v=\ln (\beta_y + \alpha \beta \rme^u - \beta u_y)$ maps solutions of \eref{exe} into solutions of the equation
\begin{equation}\label{texe}
v_{xy} = D_x (\tilde{\alpha} \rme^v) + D_y (\tilde{\beta} \rme^{-v}),
\end{equation}
where $\tilde{\alpha}=\beta^{-1}$, $\tilde{\beta}=\beta ((\ln \beta)_{xy} + \alpha \beta)$. All Darboux integrable equations of the form~\eref{exe} either coincide with the equations~\eref{seq} satisfying the condition \eref{wron}, or are obtained from them by applying the above substitution one or multiple times. 
\end{lemma}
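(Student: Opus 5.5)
We will prove Lemma~\ref{trall} in two steps: first compute the transformed equation by running the scheme \eref{it}--\eref{pq} explicitly, then combine Theorem~\ref{ljmt} with Theorem~\ref{foi} and Lemma~\ref{l1} to get the classification.

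\emph{Computing the transformed equation.} We write \eref{exe} in the form \eref{cl}. Since $u_{xy}=D_x(\alpha\rme^u)+D_y(\beta\rme^{-u})$ and $D_y(\beta\rme^{-u})=\beta_y\rme^{-u}-\beta\rme^{-u}u_y$, we take
\[ \varphi=\ln\bigl(\beta_y+\alpha\beta\rme^u-\beta u_y\bigr). \]
Multiplying the equation by $\beta\rme^{-u}$, we aim to show that $D_x(\beta\rme^{-u}(u_y-\alpha\rme^u)-\beta_y\rme^{-u})$ reduces, modulo terms $D_x$ of something, to $\psi$. A cleaner route is to set $s=\beta_y+\alpha\beta\rme^u-\beta u_y$. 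We compute $D_x(s)$ using $u_{xy}=\alpha_x\rme^u+\alpha\rme^u u_x+\beta_y\rme^{-u}-\beta\rme^{-u}u_y+\dots$ with $\beta_x$ terms included. The $u_x$-terms should cancel: the term $\alpha\beta\rme^u u_x$ is cancelled by $-\beta\alpha\rme^u u_x$ coming from $-\beta u_{xy}$. This leaves $D_x(s)=s\cdot(\text{function of }x,y,u,u_y)$, so $v_x=D_x\ln s=\psi(x,y,u,u_y)$, which is exactly \eref{cl}.

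Next we solve \eref{it}. From $\rme^v=s$ we get $u_y$ linearly in $\rme^v$ and $\rme^u$. Substituting into $v_x=\psi$, which should be affine in $\rme^{-u}$ or $\rme^{u}$, lets us solve for $\rme^u$ as a function of $v,v_x$. We expect $\rme^{-u}$ to equal $\tilde\alpha\rme^{v}\cdot(\dots)$, in the style of $u=\ln(\beta v_x/v)$ used in Lemma~\ref{l2}, yielding $\varrho$ and $\sigma$. Then \eref{pq} becomes $D_y(\varrho)=\sigma$, and we check directly that it is equivalent to \eref{texe} with $\tilde\alpha=\beta^{-1}$ and $\tilde\beta=\beta((\ln\beta)_{xy}+\alpha\beta)$. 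This check is a routine but messy calculation, and it is the main technical obstacle. To control it, we would verify the result in reverse: compute $v_{xy}$ for $v=\ln s$ along solutions of \eref{exe}, and compare it with $D_x(\beta^{-1}\rme^v)+D_y(\tilde\beta\rme^{-v})$, checking the coefficients of $u_y^2$, $u_y$, $1$ and the powers of $\rme^{\pm u}$ separately.

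\emph{Classification.} Let \eref{exe} be Darboux integrable with a minimal $x$-integral of order $k$ and a minimal $y$-integral of order $m$. We argue by induction on $m$; by Theorem~\ref{ljmt} the transformed equation has orders $(k+1,m-1)$. The inverse direction of the scheme is needed here: \eref{texe} is transformed back into \eref{exe}, and the inverse substitution $u=\varrho$ lowers the $x$-order. So it is better to induct on $k$. If $k=1$, Lemma~\ref{l1} gives $\alpha=0$, and Theorem~\ref{foi} then gives \eref{seq} with \eref{wron}.

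If $k>1$, then $\alpha\ne0$ by Lemma~\ref{l1}. By the symmetry $x\leftrightarrow y$, $u\to-u$, which swaps the roles of $\alpha$ and $\beta$, we can write \eref{exe} as the image of some equation $u'_{xy}=D_x(\alpha'\rme^{u'})+D_y(\beta'\rme^{-u'})$ under our substitution. The point is to show that the formulas $\tilde\alpha=\beta^{-1}$, $\tilde\beta=\dots$ can be inverted: set $\beta'=1/\alpha$ and $\alpha'=(\tilde\beta/\beta'-(\ln\beta')_{xy})/\beta'$. The preimage then has $x$-order $k-1$ by Theorem~\ref{ljmt}, applied with the roles of $v$ and $u$ exchanged, so the induction closes. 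Throughout we must make sure that the needed $\beta'\ne0$ holds, which is exactly $\alpha\ne0$.
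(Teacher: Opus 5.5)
Your classification argument is the paper's. You induct on the order $k$ of a minimal $x$-integral and settle $k=1$ by Lemma~\ref{l1} and Theorem~\ref{foi}. For $k>1$ you use $\alpha\ne0$ to invert the coefficient map, $\beta'=1/\alpha$, $\alpha'=\alpha^2\beta+\alpha(\ln\alpha)_{xy}$. The ``if and only if'' in Theorem~\ref{ljmt}, applied with \eref{hyp} taken to be the preimage, then shows that the preimage is Darboux integrable with minimal $x$-order exactly $k-1$. You are in fact more explicit than the paper about this inversion. Drop the appeal to the symmetry $x\leftrightarrow y$, $u\to-u$: it does not produce a preimage under the substitution. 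Only the inversion of $\tilde\alpha=\beta^{-1}$, $\tilde\beta=\beta((\ln\beta)_{xy}+\alpha\beta)$, combined with the first assertion of the lemma, does that. No exchange of roles is needed when invoking Theorem~\ref{ljmt} either.

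The gap is that the first assertion, on which all of this rests, is only announced (``routine but messy'') and never derived. The paper derives it by solving \eref{it}:
\begin{itemize}
\item $\rme^v=s$ gives \eref{uy}, i.e.\ $\tilde\alpha\rme^v=(\ln\beta)_y+\alpha\rme^u-u_y$;
\item your own computation of $D_x(s)$, organized as $D_x(s)=s\bigl((\ln\beta)_x-\beta\rme^{-u}\bigr)+\tilde\beta$, gives $v_x-\tilde\beta\rme^{-v}=(\ln\beta)_x-\beta\rme^{-u}$, which is \eref{emu}.
\end{itemize}
With these two relations nothing is messy. Along a solution $u$ of \eref{exe}, apply $D_y$ to the second relation and $D_x$ to the first, and subtract. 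This yields $v_{xy}-D_y(\tilde\beta\rme^{-v})-D_x(\tilde\alpha\rme^v)=u_{xy}-D_x(\alpha\rme^u)-D_y(\beta\rme^{-u})=0$.

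Read as the system \eref{pqs}, the same two relations show that $D_y(\varrho)=\sigma$ is equivalent to \eref{texe}. This is the paper's comparison of numerators in \eref{pqf}, which uses $\tilde\alpha^2\tilde\beta+\tilde\alpha_x\beta_y/\beta-\alpha=-\tilde\alpha_{xy}$. That equivalence between the equation produced by the scheme and \eref{texe} is what Theorem~\ref{ljmt} needs. Your proposed brute-force check of coefficients would only show that solutions go to solutions. It would then need the extra remark that $(v,v_x,v_y)$ range over an open set along images of solutions.
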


\begin{proof} 
Instead of the obvious representation $D_x(u_y - \alpha \rme^u) = D_y (\beta \rme^{-u})$, we write~\eref{exe} as 
\begin{equation}\label{rep}
D_x(\ln (\beta_y + \alpha \beta \rme^u - \beta u_y)) = \frac{\beta_{xy} + \beta_x (\alpha \rme^u - u_y) - \beta D_y (  \beta \rme^{-u})}{\beta_y + \alpha \beta \rme^u - \beta u_y}.
\end{equation}
(I.e., we consider the composition of the substitution $\tilde{v}=u_y - \alpha \rme^u$ and the change $v=\ln (\beta_y - \beta \tilde{v})$ to again obtain an equation of the form~\eref{exe}.)  
Then the corresponding system~\eref{it} is
\begin{equation*}
\rme^v=\beta_y + \alpha \beta \rme^u - \beta u_y,  \qquad \rme^v v_x = \beta_{xy} + \beta_x (\alpha \rme^u - u_y) + \beta \rme^{-u} (\beta u_y - \beta_y).
\end{equation*}
Solving its first equation for $u_y$ and substituting the result into the second one, we have
\begin{equation}\label{uy}
u_y=\frac{\beta_y + \alpha \beta \rme^u - \rme^v}{\beta}, 
\end{equation}
\[ \rme^v v_x  = \beta_{xy} + \beta_x \frac{\rme^v-\beta_y}{\beta} + \beta \rme^{-u} (\alpha \beta \rme^u - \rme^v) = \beta_{xy} + \frac{\beta_x}{\beta} \rme^v - \frac{\beta_x \beta_y}{\beta} + \alpha \beta^2 - \beta \rme^{v-u}. \]
The last equality gives us the expression
\begin{equation}\label{emu}
\rme^{-u} = \frac{\beta_x}{\beta^2} - \frac{v_x}{\beta} + \left(\frac{\beta_{xy}}{\beta} -\frac{\beta_x \beta_y}{\beta^2} + \alpha \beta \right) \rme^{-v} = - \tilde{\alpha}_{x} - \tilde{\alpha} v_x + \tilde{\alpha} \tilde{\beta} \rme^{-v}.
\end{equation}
Substituting~\eref{emu} into both sides of~\eref{uy}, we see that the corresponding equation~\eref{pq} is 
\begin{equation}\label{pqf}
\frac{D_y(\tilde{\alpha} \tilde{\beta} \rme^{-v} - \tilde{\alpha}_{x} - \tilde{\alpha} v_x)}{\tilde{\alpha} \tilde{\beta} \rme^{-v} - \tilde{\alpha}_{x} - \tilde{\alpha} v_x} = \frac{(\tilde{\alpha} \rme^v - \beta_y /\beta) (\tilde{\alpha} \tilde{\beta} \rme^{-v} - \tilde{\alpha}_{x} - \tilde{\alpha} v_x) - \alpha}{\tilde{\alpha} \tilde{\beta} \rme^{-v} - \tilde{\alpha}_{x} - \tilde{\alpha} v_x}.
\end{equation}
Taking $\tilde{\alpha} \beta_y /\beta = - \tilde{\alpha}_y$ into account and comparing the numerators on the left-hand and right-hand sides of~\eref{pqf}, we obtain
\[ \fl \tilde{\alpha} D_y(\tilde{\beta} \rme^{-v}) + \tilde{\alpha}_y (\tilde{\beta} \rme^{-v} - v_x) - \tilde{\alpha}_{xy} - \tilde{\alpha} v_{xy} 
= - \tilde{\alpha} D_x (\tilde{\alpha} \rme^v) + \tilde{\alpha}_y (\tilde{\beta} \rme^{-v} - v_x) + \tilde{\alpha}^2 \tilde{\beta} + \tilde{\alpha}_x \frac{\beta_y}{\beta} - \alpha \]
It is easy to check that $\tilde{\alpha}^2 \tilde{\beta} + \tilde{\alpha}_x \beta_y /\beta - \alpha =  - \tilde{\alpha}_{xy}$ and, therefore, \eref{pqf} is equivalent to the equation~\eref{texe}.

Reversing the above reasoning, we see that the substitution $u=-\ln(\tilde{\alpha} \tilde{\beta} \rme^{-v} - \tilde{\alpha}_{x} - \tilde{\alpha} v_x) $ maps solutions of~\eref{texe} into solutions of~\eref{exe} if $\tilde{\alpha} \ne 0$ (otherwise, \eref{texe} has the first order $x$-integral $v_x-\tilde{\beta} \rme^{-v}$). We can repeat this substitution until we arrive at an equation of the form~\eref{exe} with $\alpha=0$. By Theorems~\ref{ljmt} and \ref{foi}, the later will happen after $k$ applications of the substitution if~\eref{texe} is Darboux integrable with a minimal $x$-integral of order $k+1$. The final equation has the form~\eref{seq} with $\Delta_{n-1}\ne 0$ by Theorem~\ref{foi}. All Darboux integrable equations~\eref{exe} are obtained from this equation via the substitutions described in the Lemma.
\end{proof}

\begin{proof}[Proof of Theorem~\ref{main}] 
If we start from the equation~\eref{seq}, then after $i$-fold application of the transformation from Lemma~\ref{trall} we arrive at the equation
\begin{equation}\label{aben}
u_{xy} = D_x (\alpha_i \rme^u) + D_y (\beta_i \rme^{-u}).
\end{equation}
The equations~\eref{aben} are Darboux integrable by Theorem~\ref{ljmt}. Lemma~\ref{trall} implies that
\begin{equation}\label{abi}
\alpha_{i+1} =1/\beta_i, \quad \beta_{i+1}=\beta_i ((\ln \beta_i)_{xy} + \alpha_i \beta_i), \quad \alpha_0 = 0, \quad \beta_0 = \sum_{i=1}^{n} \xi_i(y) \eta_i(x).
\end{equation}

Note that $\beta_1=-\beta_0 H_1$, where $H_i$ are defined by~\eref{hi} with $\beta=\beta_0$. An induction on $i$ shows  that 
$\beta_i = (-1)^i \beta_0 H_1 \dots H_i$ for all $i>1$ with $H_1 \dots H_{i-1} \ne 0$.
Indeed, substituting the last equality into the second formula of \eref{abi}, taking $\alpha_i \beta_i = \beta_i/\beta_{i-1} = - H_i$ into account and comparing the result with~\eref{hi}, we obtain
\[ \beta_{i+1} = (-1)^i \beta_0 H_1 \dots H_i \left( (\ln(\beta_0 H_1 \dots H_i))_{xy} - H_i \right) = (-1)^{i+1} \beta_0 H_1 \dots H_i H_{i+1}. \]
The application of Lemma~\ref{hid} to the above formula for $\beta_i$ yields formulas~\eref{delt} for the coefficients of~\eref{aben}. Theorem~\ref{foi} ensures the correctness of dividing by $\beta_i$ and $H_i$ as it guarantees that $\Delta_i \ne 0$ and, hence, $\beta_i H_i \ne 0$ for all $i < n$. Theorem~\ref{foi} also implies Theorem~\ref{main} for $i=0$.

Let us apply Theorem~\ref{ljmt}, assuming that the minimal integrals of~\eref{aben} are defined by formulas~\eref{abei} for some $i<n$. The proof of Lemma~\ref{trall} (see \eref{rep}, \eref{emu}) shows that
\begin{equation}\label{fb}
\fl \varphi = \ln ((\beta_i)_y + \alpha_i \beta_i \rme^u - \beta_i u_y), \qquad \varrho = - \ln ( \alpha_{i+1} \beta_{i+1} \rme^{-v} - (\alpha_{i+1})_{x} -\alpha_{i+1} v_x). 
\end{equation} 
As it is easy to check, $( D_y - \alpha_i \rme^{u} - (\ln\beta_{i})_y)(\rme^u) = - \rme^{\varphi} \rme^u/\beta_i$ and
\begin{equation}\label{opr}
\left( D_y - \alpha_i \rme^u - (\ln\beta_{j})_y \right) \circ \frac{\rme^u}{\beta_i} = \frac{\rme^u}{\beta_i} \left( D_y - \alpha_{i+1} \rme^{\varphi} - (\ln\beta_{j})_y \right).
\end{equation}
The coefficient of $\rme^{\varphi}$ in~\eref{opr} is obtained by taking $\alpha_{i+1} = 1/\beta_i$ into account. The last three equalities allow us to rewrite the minimal $y$-integral $\bar{w}=D_y(\ln\Lambda_i(\rme^u)) - \alpha_i \rme^u$ of~\eref{aben} as
\begin{equation}\label{abin}
\bar{w} = D_y\left(\ln\left(\frac{\rme^u}{\beta_i} \Lambda_{i+1}(\rme^\varphi)\right)\right) - \alpha_i \rme^u = D_y(\ln\Lambda_{i+1}(\rme^{\varphi})) - \alpha_{i+1} \rme^{\varphi}, 
\end{equation}
where $\Lambda_{i+1}$ is the identity map for $i+1=n$, and is defined by \eref{lami} with $u=\varphi$ for $i+1 < n$.
By Theorem~\ref{ljmt},  the rightmost side of~\eref{abin} after replacing $\varphi$ with $v$ yields the minimal $y$-integral for the equation
\begin{equation}\label{abenp}
v_{xy} = D_x (\alpha_{i+1} \rme^v) + D_y (\beta_{i+1} \rme^{-v}).
\end{equation}
Thus, the second formula for the minimal integrals in~\eref{abei} is proved by induction on $i$. 

The second equation in~\eref{fb} can be written as
\begin{equation}\label{emp}
\rme^{-\varrho} = \alpha_{i+1} (\beta_{i+1} \rme^{-v} - (\ln\alpha_{i+1})_x - v_x).
\end{equation}
This equality and $\beta_i \alpha_{i+1} = 1$ imply
\[ \left(D_x + \beta_i \rme^{-\varrho} - (\ln \alpha_j)_x \right) \circ \alpha_{i+1} \rme^v = \alpha_{i+1} \rme^v \left(D_x + \beta_{i+1} \rme^{-v} - (\ln \alpha_j)_x \right).\]
Equation~\eref{emp} can also be represented in the form
\[ \rme^{-\varrho} = \alpha_{i+1} \rme^v \left(D_x + \beta_{i+1} \rme^{-v} - (\ln\alpha_{i+1})_x \right) (\rme^{-v}). \] 
Substituting $\varrho$ for $u$ into the first formula of~\eref{abei} and using the preceding equalities, we obtain by Theorem~\ref{ljmt} that the minimal $x$-integral of~\eref{abenp} is 
\[ D_x\left(\ln\Upsilon_i\left(\alpha_{i+1} \rme^v \left(D_x + \beta_{i+1} \rme^{-v} - (\ln\alpha_{i+1})_x \right) (\rme^{-v})\right)\right) + \beta_i \rme^{-\varrho} \] 
\[ = D_x(\ln(\alpha_{i+1} \rme^{v} \Upsilon_{i+1}(\rme^{-v}))) + \beta_{i+1} \rme^{-v} - (\ln\alpha_{i+1})_x - v_x \]
\[ = D_x(\ln\Upsilon_{i+1}(\rme^{-v})) + \beta_{i+1} \rme^{-v}. \]
This completes the proof of Theorem~\ref{main} by induction on i.
\end{proof}

According to~\cite{ZhIzv}, minimal integrals of orders $k>1$ can be chosen so that they depend linearly on the $k$-th derivative of $u$. In addition, it is not difficult to check that the orders of the leading coefficients in such integrals are uniquely defined (see subsection~3.2 in~\cite{Sljm} for details) and are preserved under point transformations of~\eref{hyp}. The leading coefficients $-\rme^{-u}/\Upsilon_i(\rme^{-u})$ and $\rme^u /\Lambda_i(\rme^{u})$ of~\eref{abei} respectively have the orders $i$ and $n-i$. This means that  the equations~\eref{abe} are very likely new in the generic case (at least under the condition $\max(i,n-i) > 2$) because the orders of the leading coefficients in the minimal integrals are less than 3 for all other Darboux integrable equations~\eref{hyp} known to the author. Since the equations~\eref{abe} admitting first-order integrals (i.e., the cases $i=0$ and $i=n$) were obtained in the recent work~\cite{St25}, genuinely new Darboux integrable equations arise for $i$ ranging from $1$ to $n-1$.

For example, if $n=3$, then Theorem~\ref{main} for $i=1$ and $i=2$ gives the equations with minimal integrals of order 3 in one characteristic and of order 2 in the other. To the author's best knowledge, all other equations with minimal integrals of such orders  either depend nonlinearly on the derivatives (see the Laine equations in \cite{ZhYu} and class 3 in \cite{ZhSok}) or are fully linear (with zero orders of the leading coefficients in both the minimal integrals). Neither of the above cases can be related via point transformations to any equation~\eref{abe} because the latter depends linearly on the derivatives and has the leading coefficient of non-zero order in at least one of the minimal integrals. Therefore, the equations~\eref{abe} with $n=3$ and $i \in \{1,2\}$ are also new Darboux integrable equations, although the condition $\max(i,n-i) > 2$ fails to hold in this case.

\end{document}